\algrenewcommand\algorithmicrequire{\textbf{Input:}}
\algrenewcommand\algorithmicensure{\textbf{Output:}}
\newtheorem{prop}{Proposition}
\newtheorem{theorem}{Theorem}
\newtheorem{observation}{Observation}
\newtheorem{problem}{Problem}
\begin{document}

\title{Improved lower bounds for the maximum size of \\ Condorcet domains}

\date{}

\author[1,2]{Alexander Karpov}
\author[3]{Klas Markstr{\"o}m} 
\author[4]{S{\o}ren Riis}
\author[5]{Bei Zhou\footnote{Authors are  listed in alphabetical order}}
\affil[1]{HSE University, Moscow, Russia}
\affil[2]{Institute of Control Sciences, Russian Academy of Sciences, Moscow, Russia}
\affil[3]{Ume\aa\ University, Ume\aa\, Sweden} 
\affil[4]{Queen Mary University of London, London, United Kingdom}
\affil[5]{Imperial College London, London, United Kingdom}
\renewcommand\Affilfont{\small}

\maketitle

\begin{abstract}
Condorcet domains are sets of linear orders with the property that, whenever voters' preferences are restricted to the domain, the pairwise majority relation (for an odd number of voters) is transitive and hence a linear order. Determining the maximum size of a Condorcet domain, sometimes under additional constraints, has been a longstanding problem in the mathematical theory of majority voting. The exact maximum is only known for $n\leq 8$ alternatives.

In this paper we use a structural analysis of the largest domains for small $n$ to design a new inductive search method. Using an implementation of this method on a supercomputer, together with existing algorithms, we improve the size of the largest known domains for all $9 \leq n \leq 20$. These domains are then used in a separate construction to obtain the currently largest known domains for $21 \leq n \leq 25$, and to improve the best asymptotic lower bound for the maximum size of a Condorcet domain to $\Omega(2.198139^n)$. Finally, we discuss properties of the domains found and state several open problems and conjectures.
\end{abstract}

\section{Introduction}
One of the most studied problems in social choice theory is to rank a set of alternatives by applying majority voting to the preferences submitted by a group of voters. Here each voter ranks the alternatives by giving a linear order on that set. Next, a ranking is produced by comparing each pair of alternatives according to the majority vote on their pairwise ranking\footnote{We assume that the number of voters is odd.}. In his seminal work, Condorcet \cite{Con85} showed in 1785 that this procedure does not always lead to even a partial order on the alternatives: preference cycles may occur. A \emph{Condorcet domain} is a set of linear orders such that whenever voters choose their rankings from the domain, majority voting leads to a linear order. This can be thought of as a well-behaved preference landscape.

The first non-trivial model for the possible rankings of a collection of alternatives was Black's single-peaked domain. This domain was introduced formally in \cite{black} and corresponds to a society in which all voters share a common political axis: each voter has a preferred position, the ``peak'', on this axis, and ranks alternatives on each side of the peak by closeness to the peak. Black's single-peaked domain has size $2^{n-1}$, where $n$ is the number of alternatives. Soon after, Arrow \cite{arrow63} introduced what we now call Arrow's single-peaked domains, which generalize Black's domain but have the same maximum size. In fact, as shown in \cite{slinko2019condorcet}, all maximal domains in this class have the same size.

In the 1970s further examples of domains with the same maximum size were found, and Johnson \cite{Johnson78} conjectured that $2^{n-1}$ was the maximum size of a Condorcet domain. This was disproved in \cite{AJ84}, though only by an additive constant. Johnson's conjecture was made again in 1992 by Craven \cite{craven1992social}, who was not aware of the older conjecture and counterexamples. The repeated conjecture stimulated interest, and Fishburn found a new counterexample for $n=5$ \cite{PF:1992}. Later, in \cite{fishburn1996acyclic} he generalized this via his ``alternating scheme'', which produces what are now called Fishburn's domains. These domains have size of order $n\times 2^n$ \cite{GR:2008}, thereby beating the size in the old conjectures by a polynomial factor and, in fact, giving the largest known domains for all small values of $n$ at the time. Fishburn also introduced the replacement scheme and showed that for $n\geq 16$ it gives larger domains than the alternating scheme, and can yield a growth rate $a^n$, where $a>2$. Much later, Raz \cite{Raz} proved an upper bound of the form $\mathcal{O}(b^n)$, for an implicit constant $b>2$, for the maximum size of a Condorcet domain on $n$ alternatives.

Further examples of large domains with various additional conditions were found in \cite{DKK:2012}, for $n=42$ alternatives, and in \cite{karpovslinko} for $n\geq 34$ alternatives. In \cite{Leedham-Green2023condorcet} the largest Condorcet domain for $n=8$ was determined: it has size 224, while the Fishburn domain has size 222.

For $n=9,\ldots,15$ the Fishburn domains remained the largest known domains until \cite{karpovslinko} came up with a new construction combining domains on fewer alternatives, which led to larger domains for $n\geq 13$. For $n=10$ and $n=11$, domains larger than Fishburn's were given in \cite{zhou2025efficient}, found by a machine-learning-guided computer search. In \cite{karpov2025coherent} the class of \emph{coherent domains} was introduced, and it was shown that a method called the odd set-alternating scheme produces domains in this class which are large enough to improve the asymptotic lower bound on the size of maximum Condorcet domains to $2.1973^n$, which was previously the best lower bound.

From a voting-theoretic perspective, one reason for looking for large domains is, as stated for example in \cite{raynaud1982individual}, that large domains correspond to more freedom in individual preferences. However, size is far from the only property of Condorcet domains that has been studied. The survey by Monjardet \cite{monjardet2009acyclic} emphasizes properties with connections to various parts of mathematics. The more recent survey by Puppe and Slinko \cite{puppe2024maximal} covers both properties with a more voting-theoretic emphasis and most of the developments since Monjardet's survey.

The aim of this paper is to take another step in the construction of large Condorcet domains. Based on our study of subdomains of the known maximum domains for $n\leq 8$, we have devised a heuristic search method. Running an implementation of this method on a supercomputer, we have found the largest known Condorcet domains for $n=9,\ldots,20$, and used the method from \cite{karpovslinko} to give improved values for $n=21,\ldots,25$ as well. Using our domains in Fishburn's replacement scheme leads to an improved lower bound of $\Omega(2.198139^n)$ for the maximum size of Condorcet domains. As part of our investigation, we have also determined the maximum size in some subclasses of Condorcet domains for small $n$.

\subsection{Overview}
The structure of this paper is as follows. Section~\ref{sec:background} provides notation and definitions from the theory of Condorcet domains. In Section~\ref{sec:maximum_domains} we introduce the various classes of Condorcet domain which have been studied in the existing literature. Section~\ref{sec:restrictions} discusses restrictions of domains to subsets of the alternatives, and gives a few new enumerative results. In Section~\ref{sec:alg} we present our new algorithm for construction of large Condorcet domains and present the results of our search for $n\leq 20$.  Section~\ref{sec:asymptotic} discusses asymptotic lower bounds and uses the domains from the previous section to give the largest known  asymptotic lower bound for the size of maximum Condorcet domains. Finally, Section~\ref{sec:large_domain_conclusion} discusses the structure of both our new and some previously known large domains and presents several open problems and conjectures.

\section{Definitions and Notation}
\label{sec:background}

Let $X=[n]=\{1,\dots,n\}$ be the set of alternatives, and let $\mathcal{L}(X)$
denote the set of all linear orders on $X$. Let $N$ be the set of agents, where
each agent $i\in N$ has a preference order $P_i\in \mathcal{L}(X)$. For brevity,
we write a preference order as a string; for example, $12\cdots n$ means that $1$
is the most preferred (ranked first) and $n$ is the least preferred (ranked last).

A subset of linear orders $\mathcal{D}\subseteq \mathcal{L}(X)$ is called a
\emph{domain}. The size of a domain is the number of orders it contains. A domain
$\mathcal{D}$ is a \emph{Condorcet domain} if, whenever the preferences of all
agents belong to $\mathcal{D}$, the majority relation of any preference profile
with an odd number of agents is transitive.

We say that $\mathcal{D}$ is a \emph{maximum} Condorcet domain if it has the
largest size among all Condorcet domains on the same set of alternatives.
A Condorcet domain $\mathcal{D}$ is \emph{maximal} if every Condorcet domain
$\mathcal{D}' \supset \mathcal{D}$ (on the same set of alternatives) coincides
with $\mathcal{D}$.

As usual, a \emph{maximum} domain is a domain of largest possible cardinality,
and a \emph{maximal}  domain is a domain which is not a subset of a larger domain.
Every maximum Condorcet domain is maximal, but not vice versa.  Since our goal is to find maximum domains, unless otherwise specified, we focus on maximal domains. We use the abbreviation MCD for maximal Condorcet domain.

Two domains $\mathcal{D}_1$ and $\mathcal{D}_2$ are \emph{isomorphic} if
$\mathcal{D}_2$ can be obtained from $\mathcal{D}_1$ by relabelling the
alternatives.

Following \cite{akello2024condorcet}, the \emph{core} of a CD is the set of
linear orders in it that map it to itself, and the \emph{dual} of a CD is a CD
obtained by reversing each linear order in it. A CD is \emph{self-dual} if it is
isomorphic to its dual.

A \emph{societal axis} for a domain is a designated linear order on the set of
alternatives. In voting theory this is usually assumed to correspond to some
organising principle for the alternatives, e.g.\ a left--right political scale.
The societal axis is not necessarily an element of the domain, e.g.\ for some
non-maximal domains. Throughout the paper we shall assume that the societal axis is the standard order $12\ldots n$. Note that every Condorcet domain with a societal axis is isomorphic to a domain of this form. 

Sen \cite{Sen1966} proved that a domain is a Condorcet domain if the restriction
of the domain to any triple of alternatives $(a,b,c)$ satisfies a never
condition. A never condition can be of three forms $xNb$, $xNm$, or $xNt$,
referred to as a \emph{never bottom}, \emph{never middle}, and \emph{never top}
condition, respectively. Here $x$ is an alternative from the triple, and $xNb$,
$xNm$, and $xNt$ mean that $x$ is not ranked last, second, or first,
respectively, in the restricted domain. Fishburn noted that for domains with a
societal axis, never conditions can instead be described as $iNj$, $i,j\in[3]$.
Here $iNj$ means that the $i$th alternative from the triple according to the
societal axis does not occupy the $j$th position within this triple in each
order from the domain. For example, the singleton domain $\{abc\}$ on the triple
$a,b,c\in[n]$ with $a<b<c$ satisfies never conditions $1N2,1N3,2N1,2N3,3N1,3N2$,
but violates never conditions $1N1,2N2,3N3$.

We use $\mathcal{N}_{[n]}$ to denote a complete set of triples together with
their assigned never conditions on the set of alternatives $[n]$. We write
$\mathcal{N}_{[n]}(\mathcal{D})$ for a choice of $\mathcal{N}_{[n]}$ that leads
to the CD $\mathcal{D}$, and $\mathcal{D}(\mathcal{N}_{[n]})$ for the CD that
satisfies all the never conditions in $\mathcal{N}_{[n]}$. Moreover,
$\mathcal{N}_{[n]\setminus\{i,j\}}$ denotes the subset of triples in
$\mathcal{N}_{[n]}$ that do not use alternatives $i$ or $j$ (together with their
assigned never conditions), and $\mathcal{U}(\mathcal{N}_{[n]})$ denotes the set
of triples in $\mathcal{N}_{[n]}$ that do not have a never condition assigned.

Fishburn's alternating scheme constructs a set of never conditions by specifying
that if the middle element of a triple is even, it is assigned the never
condition $2N1$, and otherwise $2N3$ \cite{fishburn1996acyclic}. The resulting
domain is commonly called the Fishburn domain \cite{slinko2024family}, as is its
dual domain for odd $n$.

A CD $\mathcal{D}$ is a \emph{full} Condorcet domain if it coincides with the
set of linear orders satisfying the same never conditions as $\mathcal{D}$.

A Condorcet domain $\mathcal{D}$ is \emph{connected} if every pair of orders
from the domain can be connected by a sequence of orders from the domain such
that consecutive orders in the sequence differ by a transposition of neighbouring
alternatives \cite{puppe2024maximal}.

A Condorcet domain $\mathcal{D}$ is a \emph{peak-pit} domain \cite{DKK:2012} if
$\mathcal{D}$ is defined by a complete set of never conditions that contain only
types $iN1$ or $iN3$, $i\in[3]$.

A Condorcet domain $\mathcal{D}$ has \emph{maximum width} if it contains a pair
of orders such that one is the reverse of the other.

Most of the programs used for our work are built on the Condorcet Domain Library
(CDL) \cite{zhou2024cdl}, which provides a wide range of functionalities for
computational problems related to Condorcet domains.

\section{Maximum domains for different domain types}
\label{sec:maximum_domains}
In the literature on Condorcet domains, the problem of determining maximum domain
sizes has been examined both for general Condorcet domains and for various
subclasses. We list the main such classes and give a few new results on their
sizes.

In \cite{fishburn1996acyclic} Fishburn introduced the size function for
Condorcet domains:
\[
f(n)=\max \{|\mathcal{D}|: \text{$\mathcal{D}$ is a Condorcet domain on a set of $n$ alternatives}\}.
\]
Later, \cite{karpovslinko} introduced size functions for two classes of peak-pit
domains:
\begin{itemize}
    \item $g(n)=\max \{|\mathcal{D}|: \mathcal{D}\text{ is a peak-pit Condorcet domain of maximum width on a set of $n$ alternatives}\}.$
    \item $h(n)=\max \{|\mathcal{D}|: \mathcal{D}\text{ is a peak-pit Condorcet domain on a set of $n$ alternatives}\}.$
\end{itemize}
The class of domains of maximum width, not necessarily peak-pit, is also well
studied, and we denote the maximum size in this class by
\[
g_w(n)=\max \{|\mathcal{D}|: \mathcal{D}\text{ is a Condorcet domain of maximum width on a set of $n$ alternatives}\}.
\]
By definition, $g(n) \leq h(n)\leq f(n)$ and $g(n)\leq g_w(n) \leq f(n)$. It is
not known whether $g_w(n)\leq h(n)$ for all $n$.

\begin{problem}
    Is $g_w(n)\leq h(n)$ for all $n$?
\end{problem}

From the data in \cite{akello2024condorcet} and \cite{Leedham-Green2023condorcet}
we obtain:
\begin{observation}
    For $n\leq 8$, $g(n)=g_w(n)$ and $h(n)=f(n)$.
\end{observation}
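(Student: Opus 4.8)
The plan is to verify the claimed equalities by consulting the tabulated data for $n \leq 8$ in the two cited references and checking that the relevant maximum sizes agree across the stated domain classes. Specifically, from \cite{Leedham-Green2023condorcet} we have the exact values of $f(n)$ for all $n \leq 8$ (in particular $f(8) = 224$), and from \cite{akello2024condorcet} we have the exact values of $g(n)$, $h(n)$, and $g_w(n)$ for the same range, obtained by exhaustive enumeration of maximal Condorcet domains in each subclass. First I would lay out a small table with one row per value of $n$ from $1$ to $8$ and one column per quantity $g(n)$, $g_w(n)$, $h(n)$, $f(n)$, reading the entries directly from the cited sources. Then the verification reduces to comparing, row by row, the $g$-column with the $g_w$-column and the $h$-column with the $f$-column.

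The second step is to confirm that no subtlety arises from the definitions: since $g(n) \leq g_w(n)$ and $h(n) \leq f(n)$ always hold (as noted in the excerpt, from $g(n)\leq g_w(n)\leq f(n)$ and $g(n)\leq h(n)\leq f(n)$), it suffices to exhibit, for each $n \leq 8$, a single Condorcet domain of maximum width achieving size $g_w(n)$ that is in fact peak-pit — giving $g_w(n) \leq g(n)$ — and a single peak-pit domain achieving size $h(n)$ that is in fact a maximum Condorcet domain — giving $f(n) \leq h(n)$. For the small cases $n \leq 4$ this is essentially immediate since all the classes coincide with well-understood small domains; for $n = 5, 6, 7$ one points to the Fishburn domains (which are peak-pit and of maximum width) realising the maxima; and for $n = 8$ one points to the size-$224$ domain identified in \cite{Leedham-Green2023condorcet}, checking from the data in \cite{akello2024condorcet} that it (or an isomorphic copy) is peak-pit and of maximum width.

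The main obstacle is simply that this is a data-driven observation rather than a structural theorem: the content lies entirely in the correctness and completeness of the exhaustive computations reported in \cite{akello2024condorcet} and \cite{Leedham-Green2023condorcet}, so the ``proof'' is really a citation-and-tabulation argument. The one place where care is genuinely needed is ensuring that the enumeration in \cite{akello2024condorcet} covers \emph{all} maximal Condorcet domains in each of the three subclasses (maximum-width, peak-pit of maximum width, peak-pit) up to isomorphism for every $n \leq 8$, so that the reported maxima are not merely lower bounds; I would state explicitly that we rely on the completeness of that enumeration. No further calculation is required beyond reading off and comparing the tabulated values.
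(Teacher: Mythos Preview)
Your approach is essentially the paper's own: the observation is stated there as an immediate consequence of the data in \cite{akello2024condorcet} and \cite{Leedham-Green2023condorcet}, with no further argument, and you propose the same citation-and-tabulation verification, just spelled out in more detail.

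There is, however, one factual slip in your $n=8$ case that you should fix. You write that one checks the size-$224$ domain is ``peak-pit and of maximum width''. It is peak-pit, which is exactly what is needed for $h(8)=f(8)=224$. But it is \emph{not} of maximum width: the paper notes right after this observation that for $n\le 8$ the maximum among maximum-width domains is attained only by the Fishburn domains, and the Fishburn domain for $n=8$ has size $222$. So $g(8)=g_w(8)=222$, witnessed by the Fishburn domain (which is peak-pit and of maximum width), not by the $224$ domain. Your verification scheme is fine; you just need two different witnesses at $n=8$ --- the $224$ domain for $h=f$ and the $222$ Fishburn domain for $g=g_w$ --- rather than trying to make a single domain do both jobs.
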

It also follows that for $n\leq 8$ the maximum size among domains of maximum
width is achieved only by the Fishburn domains. We have extended these results
slightly by a new computer search.
\begin{observation}
    For $n=9$ we have $g(9)=488$, and the only domains that attain this size are
    the two Fishburn domains.
\end{observation}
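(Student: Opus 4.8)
The proof is an exhaustive computer search over the (comparatively small) class of peak--pit domains of maximum width, and I would organise it in three steps.

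First, reduce to a finite combinatorial problem. A peak--pit Condorcet domain of maximum width on $9$ alternatives may, after relabelling, be assumed to contain both $12\cdots 9$ and its reverse $987\cdots 1$. Testing the nine never conditions of a triple $\{a,b,c\}$, $a<b<c$, against these two orders leaves only $1N2,2N1,2N3,3N2$ as possibilities, and since the domain is peak--pit the condition on each triple must be of type $iN1$ or $iN3$, hence $2N1$ or $2N3$ (the middle element is never ranked first, resp.\ never ranked last, in that triple). Replacing the domain by the full domain $\mathcal{D}(\mathcal{N})$ with this complete set of never conditions only enlarges it while keeping it peak--pit and of maximum width, so $g(9)$ is the maximum of $|\mathcal{D}(\mathcal{N})|$ over the $2^{\binom{9}{3}}=2^{84}$ systems $\mathcal{N}$ that assign $2N1$ or $2N3$ to each triple, and the uniqueness claim amounts to determining which of these attain the maximum. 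The lower bound $g(9)\geq 488$ is immediate, since the two Fishburn domains on $9$ alternatives --- the alternating-scheme domain and its dual, which for odd $n$ are not isomorphic --- are peak--pit, have maximum width, and have size $488$.

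Second, carry out the search by an induction on the number of alternatives. Starting from the trivial cases, one builds all full peak--pit maximum-width domains on $[k]$ for $k=4,5,\dots,9$ layer by layer, with the counts for $k\leq 8$ checked against the known data (the previous Observation matches $g(8)=222$, attained only by the Fishburn domain). To pass from $[k]$ to $[k+1]$ one branches over the $2^{\binom{k}{2}}$ ways of assigning $2N1$ or $2N3$ to the triples $\{i,j,k+1\}$, $i<j\leq k$; for a fixed choice, each order $\tau$ of the current domain on $[k]$ extends to an order on $[k+1]$ by inserting $k+1$ into a contiguous, possibly empty, band of ranks whose endpoints are read off from these never conditions, so the size of the extended domain is the sum over $\tau$ of the widths of these bands. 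Branches are pruned using the previous Observation --- every restriction of a peak--pit maximum-width domain on $[9]$ to $8$ of its alternatives is again peak--pit of maximum width, hence has at most $g(8)=222$ orders, and this constraint can be propagated to each of the nine $8$-element restrictions as soon as the relevant triples have been decided --- together with an upper bound on the size increase of a single insertion step (substantially better than the trivial factor $k+1$): any partial domain from which $488$ orders can no longer be reached is discarded, and branches equivalent under duality or the relabelling $i\mapsto 10-i$ are merged. The domains surviving to level $9$ are then listed explicitly, their sizes computed, and each domain of size $488$ is checked to coincide, up to isomorphism, with a Fishburn domain by comparing its never conditions to the alternating scheme.

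The hardest part is the last layer. Even after symmetry reduction there are many peak--pit maximum-width domains on $8$ alternatives, and a priori each admits up to $2^{28}$ extensions, so the enumeration only becomes feasible with a good growth bound (so that the overwhelming majority of branches are killed long before level $9$) and with systematic use of the symmetry group and of the fact --- from the previous Observation --- that the unique $8$-alternative peak--pit maximum-width domain of size $222$ is a Fishburn domain, which sharply limits the possible $8$-element restrictions of a size-$488$ domain on $9$ alternatives. This is the step that consumes the bulk of the computing time. Once the search has run to completion, the bound $g(9)\leq 488$ and the identification of the extremal domains with the two Fishburn domains both follow by simply reading off its output.
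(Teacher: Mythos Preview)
Your approach and the paper's are both complete computer searches, so in essence they agree. The paper's proof is minimal: it records only that the search was carried out, and that two independent implementations (one based on the CDL library's orderly-generation algorithm, one written separately and run on different hardware) produced the same answer; correctness is certified by this cross-check rather than by a described algorithm. Your write-up instead supplies the mathematics behind such a search. The reduction of peak--pit maximum-width domains to full domains given by a choice of $2N1$ or $2N3$ on each triple is correct and cuts the problem to the $2^{84}$ such assignments, and the layer-by-layer extension with pruning via $g(8)=222$ on each $8$-element restriction is a sound way to make this feasible. What your version does not address, and what the paper stresses, is verification: an exhaustive search of this scale is only as reliable as its implementation, and the paper's point here is precisely the independent replication rather than the algorithmic idea.
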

\begin{proof}
    This result was obtained by a complete computer search using two independent
    search programs. One program used the algorithm implemented in the CDL
    library \cite{zhou2024cdl}, as explained in
    \cite{zhou2025orderlyalgorithmgenerationcondorcet}. The other program was
    written independently and ran on a different machine, to ensure independence
    in both implementation and hardware.
\end{proof}

It would be desirable to extend this result.
\begin{problem}
    Determine $h(9)$ and $g_w(9)$.
\end{problem}
We attempted to use our general search program on these two problems, but
terminated them after a partial search proved too time-consuming. It is possible
that both are within reach for a more specialised program.

\section{Restrictions of Condorcet Domains}
\label{sec:restrictions}
The \emph{restriction} of a domain $\mathcal{D}$ to a subset of alternatives
$A\subseteq X$ is the set of linear orders in $\mathcal{L}(A)$ obtained by
restricting each linear order in $\mathcal{D}$ to $A$.

Restrictions to small subsets of alternatives underlie several important
properties of Condorcet domains.

A domain is \emph{ample} if its restriction to any pair of alternatives has size
two. The term \emph{ample} was introduced in the preprint version of
\cite{puppe2024maximal}. Ampleness is a natural property from a voting-theoretic
perspective, since it says that any pair of alternatives can be ranked in both
possible ways. Weaker versions of this property have also been studied in
generalisations of the Gibbard--Satterthwaite theorem \cite{Aswal2003}. The
commonly used property of \emph{minimal richness}, so named in \cite{Aswal2003},
also implies that a domain is ample.

A Condorcet domain $\mathcal{D}$ is \emph{copious} \cite{slinko2019condorcet} if
the restriction of $\mathcal{D}$ to any triple of alternatives has size $4$,
which is the maximum possible size for three alternatives. Copiousness is
important in the study of maximal Condorcet domains, since a copious full domain
is also maximal, while a maximal domain need not be copious. However, it has
been conjectured that all maximal peak-pit domains are copious. A copious domain
is ample, but in \cite{akello2024condorcet} it was shown that there are maximal
domains that are not ample.

The restriction to subsets of five alternatives was used in \cite{zhou2025efficient}
in combination with machine-learning methods to build a heuristic search for
large Condorcet domains. That search found the largest known domains for $n=10$
and $n=11$.

In \cite{karpov2024local} several restriction properties were unified in the
following definition. A domain $\mathcal{D}$ is $(k,s)$-abundant if the
restriction of $\mathcal{D}$ to any subset of $k$ alternatives has size at least
$s$. Thus, ampleness corresponds to $(2,2)$-abundance and copiousness to
$(3,4)$-abundance. The search in \cite{zhou2025efficient} focused on
$(5,16)$-abundant domains, after it had been observed that all large Condorcet
domains for $n\leq 8$ are $(5,16)$-abundant. In \cite{karpov2024local} it was
also shown that for fixed $k$ the maximum possible abundance for sufficiently
large $n$ is $(k,2^{k-1})$.

As noted in the last section, peak-pit domains have received much interest in
the existing literature. The data from \cite{akello2024condorcet} include a
complete enumeration of all peak-pit domains. For $n=8,9$, we have been able to
enumerate the largest domains with sufficiently high abundance, obtaining the
following result.
\begin{observation}
\begin{enumerate}
    \item For $n=5$, all MCDs of size at least $19$ are $(4,8)$-abundant. There
    exist MCDs of size $18$ with lower abundance.

    \item For $n=6$, all MCDs of size at least $42$ are $(4,8)$-abundant. There
    exist peak-pit MCDs of size $41$ with lower abundance. All MCDs of size at
    least $40$ are $(5,16)$-abundant, and there are domains of size $39$ with
    lower abundance.

    \item For $n=7$, all MCDs of size at least $94$ are both $(4,8)$- and
    $(5,16)$-abundant. There are domains of size $93$ for which both abundances
    are lower.

    \item For $n=8$, the numbers of maximal $(4,8)$-abundant peak-pit domains of
    size at least $219$ are
    \[
        \{219: 3, 220: 3, 222: 1, 224: 1\},
    \]
    and the numbers of maximal $(5,16)$-abundant peak-pit domains are
    \[
        \{219: 3+, 220: 63+, 222: 1, 224: 1\}.
    \]
    Here the data are displayed in the format (CD size: number of domains); a
    suffix ``$+$'' indicates that the count is a lower bound.

    \item The maximum size of a $(5,16)$-abundant peak-pit domain for $n=9$ is
    $492$.
\end{enumerate}
\end{observation}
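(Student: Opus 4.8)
The plan is to prove this by exhaustive computer search, splitting the claim into an upper bound (no $(5,16)$-abundant peak-pit domain on $[9]$ has size exceeding $492$) and an explicit witness of size $492$. The key observation enabling the search is that every maximal peak-pit domain $\mathcal{D}$ on $[9]$ arises as $\mathcal{D}(\mathcal{N}_{[9]})$ for some complete never-condition assignment $\mathcal{N}_{[9]}$ using only conditions of type $iN1$ or $iN3$; by Sen's theorem any such assignment yields a Condorcet domain, so the whole problem is a search over these assignments together with a bookkeeping of the resulting sizes. I would organise this as a backtracking enumeration: fix the $\binom{9}{3}=84$ triples of $[9]$ in a canonical (e.g.\ lexicographic) order, and at each node assign one of the two admissible peak-pit conditions to the next unassigned triple; a complete leaf determines the full domain, whose size and whose $126$ restrictions to $5$-subsets are then read off directly.

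The heart of the argument is making the pruning strong enough that the tree is traversable. Whenever, at an internal node, all ten triples contained in some $5$-subset $A\subseteq[9]$ have been assigned, the restriction of the eventual domain to $A$ is already pinned down, and since adjoining further never conditions can only shrink that restriction, we may safely discard the branch if this restriction already has size below $16$. To cut earlier, for each $5$-subset with only part of its triples assigned I would maintain an upper bound on the size of any peak-pit restriction to $A$ compatible with the partial data (for instance, by counting the linear orders in $\mathcal{L}(A)$ not yet excluded) and prune as soon as that bound falls below $16$. Running alongside this, isomorphism rejection — extending only assignments that are canonical under the action of the symmetric group on $[9]$, as provided by the CDL framework \cite{zhou2024cdl} and the orderly generation algorithm of \cite{zhou2025orderlyalgorithmgenerationcondorcet} — prevents exploration of equivalent branches, and a branch-and-bound cut using an upper bound on the size of any peak-pit domain consistent with the current partial assignment eliminates branches that cannot beat the running champion.

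Once the search terminates it returns the complete list of maximal $(5,16)$-abundant peak-pit domains on $[9]$ with their sizes, the largest being $492$, which gives the upper bound. For the matching lower bound it then suffices to exhibit one such domain: output its never-condition set $\mathcal{N}_{[9]}$ and verify independently that (i) $\mathcal{N}_{[9]}$ uses only conditions of types $iN1$ and $iN3$, so $\mathcal{D}(\mathcal{N}_{[9]})$ is peak-pit; (ii) $\mathcal{D}(\mathcal{N}_{[9]})$ is a Condorcet domain (immediate from Sen's condition, and also checkable directly); (iii) $|\mathcal{D}(\mathcal{N}_{[9]})| = 492$; and (iv) each of the $126$ restrictions to a $5$-subset has size at least $16$. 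As in the determination of $g(9)$ earlier in the paper, I would run two independently written implementations on different hardware to guard against coding and hardware errors.

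The main obstacle is purely computational: the unrestricted search space has size $2^{\binom{9}{3}} = 2^{84}$, so feasibility rests entirely on the pruning, and in particular on bringing the $(5,16)$-abundance test to bear on partial assignments — via the partial-restriction upper bounds — rather than only at the leaves, together with effective isomorphism rejection. Tuning those partial upper bounds to be sharp enough to collapse the tree to a tractable size, and parallelising the remaining search over a cluster, is the delicate part; the verification of a single witness domain, by contrast, is routine.
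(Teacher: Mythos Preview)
Your proposal addresses only item~(5) of the Observation, whereas the statement comprises five items and the paper treats them in two different ways. For $n\le 7$ the paper does not run a fresh search at all: it simply checks the $(4,8)$- and $(5,16)$-abundance of each domain in the complete, pre-existing enumerations of all maximal Condorcet domains from \cite{akello2024condorcet}. Only for $n=8,9$ does it invoke a dedicated CDL search. Your write-up should at least indicate how items (1)--(4) are handled; for (1)--(3) the paper's route is both simpler and more definitive than re-running an exhaustive never-condition search from scratch.

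For item~(5) your overall plan --- exhaustive enumeration of peak-pit never-condition assignments with abundance-based pruning, isomorphism rejection via the orderly algorithm, and an independently verified witness --- matches the paper's approach in spirit. However, there is a concrete error in the branching factor. A peak-pit condition on a triple is any of the six rules $iN1$ or $iN3$ with $i\in\{1,2,3\}$, not two. Even after fixing the societal axis so that $12\cdots 9$ lies in the domain (which eliminates $1N1$ and $3N3$), four choices remain per triple, giving a raw search space of size $4^{84}=2^{168}$ rather than $2^{84}$. Branching over only two rules per triple amounts to searching only assignments of type $2N1$/$2N3$, i.e.\ maximum-width peak-pit domains; for $n=9$ the maximum in that class is $g(9)=488$, so the $492$ domain would be missed and your search would not certify the claimed upper bound. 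With the branching factor corrected, the remainder of your pruning and verification strategy is sound and essentially coincides with what the CDL software implements.
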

\begin{proof}
The results for $n\leq 7$ were obtained by checking the abundance of the domains
in the complete lists from \cite{akello2024condorcet}. The results for $n=8,9$
were obtained by a computer search using the software from the CDL library.
\end{proof}

Here we note that for $n=8$ there are more maximal domains of size $220$ that are
$(5,16)$-abundant than those that are $(4,8)$-abundant.

\section{Inductive Domain Construction}
\label{sec:alg}
The abundance properties discussed so far, and the variants used in the search
method of \cite{zhou2025efficient}, all consider restrictions of domains to
subsets of a fixed size. However, using the existing data from
\cite{akello2024condorcet,Leedham-Green2023condorcet}, we can also make the
following observation.
\begin{observation}\label{obs:inductive-structure}
\begin{enumerate}
    \item For $4\leq n\leq 7$, each maximum domain has a restriction to $n-1$
    alternatives that is itself a maximum domain.

    \item The maximum domain for $n=8$ has a restriction to $7$ alternatives of
    size $96$.

    \item In each case there exist two alternatives $i,j$ such that the
    restrictions to $[n]\setminus\{i\}$ and $[n]\setminus\{j\}$ are isomorphic
    domains of the sizes given above, and they have identical restrictions to
    $[n]\setminus\{i,j\}$.
\end{enumerate}
\end{observation}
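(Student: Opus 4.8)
The plan is to prove this by direct inspection of the complete classifications of maximum Condorcet domains for $4\le n\le 8$: the lists for $n\le 7$ are available in \cite{akello2024condorcet}, and the maximum domain for $n=8$ is the one determined in \cite{Leedham-Green2023condorcet}. The one structural fact needed beforehand is that restriction never destroys the Condorcet property: if $\mathcal{D}$ is a Condorcet domain on $[n]$ and $A\subseteq[n]$, then every triple of $A$ is a triple of $[n]$, so by Sen's characterisation \cite{Sen1966} the restriction of $\mathcal{D}|_{A}$ to any triple of $A$ satisfies a never condition, and hence $\mathcal{D}|_{A}$ is a Condorcet domain on $A$. In particular, any restriction of a Condorcet domain to $m$ alternatives of size $f(m)$ is automatically a \emph{maximum} Condorcet domain on those $m$ alternatives, so for Part~1 it suffices to exhibit, for each maximum domain $\mathcal{D}$ on $[n]$ with $4\le n\le 7$, one alternative $i$ with $|\mathcal{D}|_{[n]\setminus\{i\}}|=f(n-1)$.

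For Parts~1 and~2 I would, for each $n\in\{4,\dots,8\}$ and each maximum domain $\mathcal{D}$ in the classification, compute the $n$ restrictions $\mathcal{D}|_{[n]\setminus\{i\}}$ for $i\in[n]$ and record their sizes. For $n\le 7$ one checks that the largest of these sizes equals $f(n-1)$ (with $f(3)=4$, $f(4)=9$, $f(5)=20$, $f(6)=45$, $f(7)=100$), which together with the remark above gives Part~1. For $n=8$ one checks that the maximum domain of size $224$ has a restriction to $7$ alternatives of size exactly $96$; since $f(7)=100>96$, this restriction is \emph{not} itself maximum, which is precisely why Part~2 must be stated in a weaker form than Part~1.

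For Part~3, among the restrictions already computed I would, for each $n$ and each maximum domain $\mathcal{D}$, search for a pair $i\ne j$ with $|\mathcal{D}|_{[n]\setminus\{i\}}|=|\mathcal{D}|_{[n]\setminus\{j\}}|$ equal to the target value from Parts~1--2 and with $\mathcal{D}|_{[n]\setminus\{i\}}$ and $\mathcal{D}|_{[n]\setminus\{j\}}$ isomorphic as Condorcet domains; the isomorphism test is a routine finite relabelling check, or a comparison of canonical forms as provided by the CDL library \cite{zhou2024cdl}. The remaining clause --- that the two restrictions have identical restrictions to $[n]\setminus\{i,j\}$ --- needs no separate verification: since restriction is transitive, the restriction of $\mathcal{D}|_{[n]\setminus\{i\}}$ to $[n]\setminus\{i,j\}$ and the restriction of $\mathcal{D}|_{[n]\setminus\{j\}}$ to $[n]\setminus\{i,j\}$ both coincide with $\mathcal{D}|_{[n]\setminus\{i,j\}}$.

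There is no genuine mathematical obstacle here; the content is a finite computation over the published classifications, and the argument rests on the completeness of those lists (which we take as given) and on the correctness of the isomorphism test in Part~3. The only point that requires care is not to over-read Part~2: for $n=8$ the relevant restriction has size $96$, not $f(7)=100$, so $n=8$ genuinely behaves differently from the smaller cases, and this is exactly the feature that motivates the inductive search developed in Section~\ref{sec:alg}.
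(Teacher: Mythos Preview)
Your proposal is correct and matches the paper's approach: the observation is stated in the paper without a separate proof, as a direct consequence of inspecting the complete classifications from \cite{akello2024condorcet} and \cite{Leedham-Green2023condorcet}, which is exactly the finite verification you describe. Your remark that the final clause of Part~3 is automatic---since $(\mathcal{D}|_{[n]\setminus\{i\}})|_{[n]\setminus\{i,j\}}$ and $(\mathcal{D}|_{[n]\setminus\{j\}})|_{[n]\setminus\{i,j\}}$ both equal $\mathcal{D}|_{[n]\setminus\{i,j\}}$---is correct and worth making explicit; the paper includes that clause not because it needs verification here, but to motivate the gluing step in Section~\ref{sec:alg}, where the two $n$-alternative domains are no longer restrictions of a common ambient domain and compatibility on the overlap becomes a genuine constraint.
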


For MCDs in general, we have very little control over how much the size of a
domain can drop when we restrict from $n$ alternatives to $n-1$ alternatives.
However, Observation~\ref{obs:inductive-structure} suggests that a \emph{maximum}
domain typically has a restriction that remains close to maximum size, and that
it has two large isomorphic $n\!-\!1$ restrictions with a large common overlap.

Based on this, we construct an inductive search algorithm that builds domains on
$n+1$ alternatives from a set of large domains on $n$ alternatives. As we will
see, this allows us to find the largest known Condorcet domains up to $n=20$ by
direct search, and up to $n=25$ via the theoretical construction method using
these domains as input.


\subsection{The inductive extension step}
\label{sec:inductive-construction}

\paragraph{From large restrictions to a small completion problem.}
Fix $n\ge 4$ and suppose we want to construct a large Condorcet domain on
$[n{+}1]$. The key idea is to \emph{enforce} the restriction pattern from
Observation~\ref{obs:inductive-structure}.  Choose two distinct alternatives
$i,j\in[n{+}1]$ and write
\[
A=[n{+}1]\setminus\{i\},\qquad B=[n{+}1]\setminus\{j\},\qquad
C=[n{+}1]\setminus\{i,j\}.
\]
We embed large domains on $[n]$ as domains on $A$ and on $B$ (via the
order-preserving relabelling). If the two embedded domains induce the \emph{same}
never conditions on the overlap $C$ (equivalently, their restrictions to $C$
coincide), then we may take the union of their never conditions. This union
already assigns never conditions to every triple that avoids $i$ (from the domain
on $A$) and to every triple that avoids $j$ (from the domain on $B$). Hence the
only triples left undecided are precisely the $n-1$ triples containing both $i$
and $j$, namely $\{i,j,k\}$ for $k\in C$. The extension step therefore reduces
to a constrained \emph{completion} problem on these remaining $n-1$ triples.

\paragraph{Example.}
For $i=n{+}1$ and $j=1$, one embedded domain lives on
$[n]=[n{+}1]\setminus\{n{+}1\}$ and the other on
$\{2,\ldots,n{+}1\}=[n{+}1]\setminus\{1\}$. Compatibility means that their
restrictions to $\{2,\ldots,n\}$ coincide, and the only unassigned triples are
$\{1,n{+}1,k\}$ for $k=2,\ldots,n$.

\paragraph{Why this is not automatic.}
This gluing step is delicate. A large CD typically has many labelled
realisations (relabelings of alternatives), and different labelings can lead to
very different overlaps on $C$. Moreover, even when two instantiations agree on
$C$, the resulting partial set of never conditions can constrain the remaining
$\{i,j,k\}$-triples so strongly that most completions produce domains that are
not particularly large. Thus, compatibility on the overlap is necessary but in
practice far from sufficient: the search must explore many potential overlaps
and prune aggressively during completion.

\paragraph{Empirical pruning and a ``greedy'' phenomenon.}
Empirical tests for $n\leq 11$ revealed two regularities among the largest
domains found: they were peak-pit, and they were copious. Accordingly, in the
completion step we restrict to peak-pit never conditions and discard any
non-copious partial assignment.  More surprisingly, experiments also indicated a
robust greedy strategy: starting from the largest domain(s) found on $n$
alternatives, enumerating \emph{all} compatible overlaps (in the sense above) and
then searching completions of the remaining $n-1$ triples repeatedly produced
the next record domain (up to relabelling) for every $n$ tested ($8\le n\le 20$).
This observation motivates the deterministic shortcut described below.

\subsubsection{The algorithm}
We are now ready to describe our main algorithm. In Appendix~A, we give a pseudocode version of this description. Let $L_n=\{\mathcal{C}_1,\dots,\mathcal{C}_m\}$ be a list of (non-isomorphic) MCDs on $n$ alternatives. To generate candidates on $[n{+}1]$ alternatives we proceed in three steps.

\paragraph{Step 1 (instantiations on $n$-subsets).}
For each $i\in[n{+}1]$ and each $\mathcal{C}_r\in L_n$ we form the instantiated
domain
\[
  \mathcal{C}_r^{\setminus i}:=\mathcal{C}_r\bigl([n{+}1]\setminus\{i\}\bigr).
\]
Here $\mathcal{C}(A)$ denotes the domain on $A$ obtained from $\mathcal{C}$ by
relabeling alternatives via the unique order-preserving bijection $[n]\to A$.

\paragraph{Step 2 (enumerating compatible overlaps).}
Fix an ordered pair $(i,j)$ with $i\ne j$. Two instantiations
$\mathcal{C}_r^{\setminus i}$ and $\mathcal{C}_s^{\setminus j}$ are
\emph{compatible} if they assign the same never condition to every triple
from the intersection of their sets of alternatives,  $[n{+}1]\setminus\{i,j\}$. When compatibility holds, we
take the union of their sets of never conditions; otherwise the pair is
discarded. The resulting incomplete sets of never conditions are collected in an
intermediate list $L_{n+1,\mathrm{partial}}$.

\smallskip
\noindent
\emph{Implementation note.} To enumerate all compatible overlaps efficiently, we
group instantiations by their restriction to $[n{+}1]\setminus\{i,j\}$ (i.e.\ by
the subset of never conditions on triples avoiding $\{i,j\}$), and only merge
pairs within the same group.

\paragraph{Step 3 (completion with pruning).}
Each remaining partial set of never conditions is completed by a backtrack 
search that assigns peak-pit never conditions to the unassigned triples
$\{i,j,k\}$. Non-copious partial solutions are discarded during the search. We
compute the size of each completed domain and retain only those above a preset
size threshold. Finally, we reduce the resulting list by removing isomorphic
copies to obtain the next list $L_{n+1}$.

\paragraph{A deterministic shortcut for $n\ge 8$.}
Empirical runs revealed a faster path to the largest domains. Let
$\mathcal{C}^{\mathrm{max}}_n$ be the largest CD found on $n$ alternatives.
Define $i=\lfloor n/2\rfloor$ and $j=i+1$. Repeating Steps~1--3 for this single
pair $(i,j)$ alone suffices to reconstruct the largest domain on $n{+}1$
alternatives for every $n$ tested ($8\le n\le 20$).


\section{Results}
Using the algorithm from the previous section, we performed a search for
$8\leq n \leq 20$. As input, we started from the domains of size at least $90$
for $n=7$. At each step we generated candidates on $n+1$ alternatives and kept
those whose sizes were close to the largest size found. For $n\geq 16$ we kept
only the single largest domain. As $n$ increases, the search quickly becomes
both time-consuming and memory-intensive, so it was run on two different Linux
clusters using a few thousand CPU cores.

In Table~\ref{tab:inductive} we list the largest domain sizes found using this
method, together with the previously best lower bounds. The $n=8$ CD of size
$224$ was found in \cite{Leedham-Green2023condorcet}. For $n=9$ and $n=12$ the
previous lower bounds are given by the Fishburn domains. The previous lower
bounds for $n=10$ and $n=11$ are from \cite{zhou2025efficient}, and the largest
domains previously known for $13\leq n \leq 20$ were constructed in
\cite{karpovslinko}.

As the table shows, our search improves the existing lower bounds for
$9\leq n\leq 20$. The same approach would likely improve the bounds for larger
$n$ as well, but we stopped at $n=20$ since the search was already taking
multiple days on several thousand cores.

\begin{table}[H]
\centering
\captionsetup{width=0.9\linewidth}
\caption{Largest CDs found via the inductive search. The bold number indicates it is an improved lower bound over the old one. }
\label{tab:inductive}
\begin{tabular}{cccc}
\toprule
\textbf{n} & \textbf{Fishburn domains} & \textbf{Previous lower bounds} & \textbf{Improved lower bounds} \\ 
\midrule
8 & 222 & 224 & 224 \\
9 & 488  & 488 & \textbf{492} \\
10 & 1069 & 1082 & \textbf{1104} \\
11 & 2324 & 2349 & \textbf{2452} \\
12 & 5034 & 5034 & \textbf{5520} \\
13 & 10840 & 10940 & \textbf{12182} \\
14 & 23266 & 24481 & \textbf{27026} \\
15 & 49704 &  54752 & \textbf{59430} \\
16 & 105884 & 123004 & \textbf{131536} \\
17 & 224720 & 271758  & \textbf{292015} \\
18 & 475773 & 602299 & \textbf{649079} \\
19 & 1004212 &  1323862  & \textbf{1432836} \\
20 & 2115186 &  2917604  & \textbf{3155366} \\
\bottomrule
\end{tabular}
\end{table}

\subsection{Properties of the new domains}
While our focus has been to find the largest known Condorcet domains for each
$n$, there are many other properties of interest, both combinatorially and from
a voting-theoretic perspective.

We have checked the abundance properties of the largest domains for $n\geq 8$.
As for smaller $n$, these domains meet (and for $n=8$ exceed) the asymptotically
maximal abundance levels:
\begin{observation}
    The maximum domain for $n=8$ is $(4,8)$- and $(5,18)$-abundant. For $n\geq 9$
    the largest domains found are exactly $(4,8)$- and $(5,16)$-abundant.
\end{observation}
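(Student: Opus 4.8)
The plan is to treat the statement as a finite verification and carry it out directly on the explicit domains. First I would assemble the list of domains to be checked: the size-$224$ domain on $[8]$ from \cite{Leedham-Green2023condorcet} (equivalently, the unique size-$224$ entry in the complete enumeration of \cite{akello2024condorcet}), and, for each $9\le n\le 20$, the record-size domain(s) on $[n]$ produced by the algorithm of Section~\ref{sec:alg}, whose sizes are the bold entries of Table~\ref{tab:inductive}. For $n\ge 16$ only a single domain is retained by construction; for $9\le n\le 15$ there may be several non-isomorphic domains of the record size, and every one of them should enter the check.

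Next, for each domain $\mathcal{D}$ on $[n]$ in this list I would iterate over all $\binom{n}{4}$ subsets $A$ with $|A|=4$ and all $\binom{n}{5}$ subsets with $|A|=5$, form the restriction $\mathcal{D}|_A$ (the truncated orders, deduplicated), and record its size; call the resulting minima $m_4(\mathcal{D})=\min_{|A|=4}|\mathcal{D}|_A|$ and $m_5(\mathcal{D})=\min_{|A|=5}|\mathcal{D}|_A|$. Since $\mathcal{D}$ is $(k,s)$-abundant precisely when every $k$-restriction has size at least $s$, and \emph{exactly} $(k,s)$-abundant precisely when in addition some $k$-restriction has size exactly $s$, the whole statement reduces to reading off these two minima and confirming that they are attained. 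Concretely, the claim is that $m_4(\mathcal{D})=8$ for every domain in the list, that $m_5(\mathcal{D})=18$ for the $n=8$ domain, and that $m_5(\mathcal{D})=16$ for every domain with $n\ge 9$. A cheap sanity check along the way is that each $\mathcal{D}|_A$ is itself a Condorcet domain on $|A|$ alternatives, so its size never exceeds $f(|A|)$ (in particular $f(5)=20$, consistently with the values $16$ and $18$). For the $n\ge 9$ half one may additionally cite \cite{karpov2024local}: since $(k,2^{k-1})$ is the largest attainable abundance for large $n$, the values $(4,8)$ and $(5,16)$ are optimal, so ``exactly $(4,8)$- and $(5,16)$-abundant'' is the strongest assertion available; the $n=8$ value $18>16$ is precisely the small-$n$ exception, showing that below the asymptotic regime the bound $2^{k-1}$ can be exceeded.

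All the restriction and isomorphism-reduction operations are already provided by the CDL library \cite{zhou2024cdl}, so implementing this is routine. The only genuine obstacle is scale rather than mathematics: for $n$ near $20$ the domain contains several million orders while there are on the order of $\binom{n}{4}+\binom{n}{5}$ target subsets, so the restrictions for all $4$- and $5$-subsets should be accumulated in a single streaming pass over the orders rather than recomputed subset by subset. With that organisation the verification is comfortably within reach. As with the earlier $g(9)=488$ computation, I would run an independent second implementation of the same check as a safeguard against coding errors.
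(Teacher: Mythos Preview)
Your proposal is correct and matches the paper's approach: the paper offers no separate proof for this observation, presenting it simply as the outcome of directly checking the abundance of the listed domains, which is precisely the finite verification you describe. The only minor slip is the remark that ``for $n\ge 16$ only a single domain is retained''; Table~\ref{tab:properties} in fact lists two (dual) record domains for each $16\le n\le 20$, but this does not affect your plan since you already stipulate checking every domain of the record size.
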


In Table~\ref{tab:properties} we summarise several such properties. The first two
columns give $n$ and the number of non-isomorphic largest domains found for that
value of $n$. All the largest domains we have found are peak-pit domains, even
though for $8\leq n\leq 11$ our search allowed all possible never conditions.

The column labelled \textbf{2N1+2N3} comes from an analysis of the never
conditions satisfied by the domains. Our search focuses on peak-pit never
conditions of the forms $xN1$ and $xN3$. However, among the isomorphic copies of
a given domain, the numbers of never conditions of the four types $1N3$, $3N1$,
$2N1$, and $2N3$ may vary. For each domain we found an isomorphic copy with very
few never conditions of types $2N1$ and $2N3$; in this column we report the
minimum value of $2N1+2N3$ over all isomorphic copies. Since the total number of
never conditions is ${n \choose 3}$, this means that the defining never
conditions are almost exclusively of the forms $1N3$ and $3N1$. This observation
is of particular interest due to its connection to coherent domains, which
satisfy only $1N3$ and $3N1$ never conditions. As mentioned in the introduction,
a special class of coherent domains yields the best known asymptotic lower bound
for the maximum size of Condorcet domains, and we return to this class in the
last section.

The \emph{core} size \cite{akello2024condorcet} is the number of linear orders in
the domain that, when viewed as permutations, map the domain to itself. This is
one measure of the symmetry of the domain: as Table~\ref{tab:properties} shows,
the domains up to $n=13$ have a non-trivial core, while for $n\geq 14$ the core
is trivial. This indicates a change in the structure of the domains for
$n\geq 14$, and possibly that we are no longer finding the actual largest
domains.

One well-studied property of voting domains is the number of alternatives that
can be ranked first or last in an order from the domain. A domain in which every
alternative can be ranked first is called \emph{minimally rich}, and one in which
only one alternative can be ranked first is called \emph{dictatorial}; see
\cite{Aswal2003} for a fuller discussion. For some Condorcet-domain types, such
as Arrow's single-peaked domains, this is a well understood property
\cite{slinko2019condorcet,markstrom2024arrow}, but for general Condorcet domains
there are few results. In the column \textbf{(\#First, \#Last)} we list the
numbers of distinct alternatives that appear first and last, respectively,
across all linear orders in the domain. The domains are far from being minimally
rich, but they move further away from being dictatorial as $n$ increases.

The next two columns indicate whether the domain is self-dual and whether it is
connected. Since these domains are peak-pit domains, they are expected to be
connected \cite{puppe2024maximal}. The final column gives the diameter of each
domain in the adjacent-transposition graph: the maximum, over all pairs of
orders $x$ and $y$ in the domain, of the length of a shortest sequence of
adjacent transpositions that connects $x$ to $y$ through orders in the domain. A
Condorcet domain of maximum width has diameter ${n \choose 2}$, realised by a
pair of reverse orders \cite{monjardet2009acyclic}, but for our domains the
diameter appears to grow much more slowly. An empirical fit to the data agrees
well with growth on the order of $\mathcal{O}(n(\log n)^{2.5})$, but we do not
yet understand the structure of these domains well enough to prove such a bound.

\begin{table}[H]
\centering
\captionsetup{width=1\linewidth}
\caption{Properties of the new domains. For a CD, the 2N1+2N3 values are the least number of never conditions that are not 1N3 or 3N1 in all its isomorphic forms. (\#First, \#Last) are the number of distinct alternatives ranked as the first, and the last in all linear orders of a CD. }
\label{tab:properties}
\begin{tabular}{cccccccc}
\toprule
\textbf{n} & \textbf{Count} & \textbf{2N1+2N3} & \textbf{Core size} & \textbf{(\#First, \#Last)} & \textbf{Self-dual} & \textbf{Connected} & \textbf{Diameter} \\ 
\midrule
    8 &  1 & 2 & 4 & (4, 4) & Yes & Yes & 20 \\
    9 &  2 & 2 & 2 & (4, 4) & No & Yes & 25 \\
    10 & 2 & 3 & 4 & (4, 4) & No & Yes & 31 \\
    11 & 2 & 3 & 2 & (4, 4) & No & Yes & 37 \\
    12 & 1 & 4 & 4 & (4, 4) & Yes & Yes & 44 \\
    13 & 2 & 6 & 2 & (5, 4), (4, 5) & No & Yes & 52 \\
    14 & 1 & 8 & 1 & (5, 5) & Yes& Yes & 61 \\
    15 & 2 & 9 & 1 & (5, 5) & No & Yes & 69  \\
    16 & 2 & 9 & 1 & (5, 5) & No & Yes & 78 \\
    17 & 2 & 10 & 1 & (5, 5) & No  & Yes &  87  \\
    18 & 2 & 12 & 1 & (5, 5) & No &   Yes  & 98   \\
    19 & 2 & 12 & 1 & (5, 5) & No  &   Yes  &   106 \\
    20 & 2  & 16  & 1 & (6, 5), (5,6)  & No  &     &    \\
\bottomrule
\end{tabular}
\end{table}

\subsection{Poset Structure for largest subdomains}
\label{sec:subCDforest}
For $n\geq 9$, the largest domains we have found all have the property that they
can be constructed from a largest domain on $n-1$ alternatives. As a result, for
each such domain and for any $k\geq 8$, there exist at least two subsets of the
alternatives of size $k$ such that the restriction to that subset is a largest
domain on $k$ alternatives. This means that the collection of subsets of
alternatives that induce a largest domain (for their size) forms a subposet of
the Boolean lattice. As Table~\ref{tab:properties} shows, for each $n$ there is
either a single largest domain, or a pair of dual domains. Since duality does
not change the poset at hand, we only have one restriction poset for each $n$ in
this range.

In Table~\ref{tab:subCDcounts} we show, for each $n$, the number of subsets of
size $k$ whose restriction gives a largest subdomain on $k$ alternatives. As the
table shows, these counts are typically even numbers, reflecting the fact that
there are always at least two distinct restrictions to $n-1$ alternatives.
However, for $n=13$ we find $7$ restrictions to $11$ alternatives. As for the
core in the previous section, this again singles out $n=13$ as an exceptional
case.

Even though we may reach the same subset via different chains of consecutive
restrictions---showing that our partial order is not a tree---we still observe an
apparently exponential growth in the total number of largest subdomains on a
given number of alternatives.

\begin{table}[ht]
\centering
\caption{Number of record-sized sub-CDs (up to isomorphism) of each smaller order 
found inside the largest domain(s) for $n=9,\dots,16$. 
A dash (--) indicates that the given sub-CD order does not arise or is not applicable 
for that $n$. }
\label{tab:subCDcounts}
\scalebox{0.85}{
\begin{tabular}{ccccccccccccc}
\toprule
\textbf{$n$} & 
\textbf{\# 8} & 
\textbf{\# 9} & 
\textbf{\# 10} & 
\textbf{\# 11} & 
\textbf{\# 12} & 
\textbf{\# 13} & 
\textbf{\# 14} & 
\textbf{\# 15} &
\textbf{\# 16} &
\textbf{\# 17} &
\textbf{\# 18} &
\textbf{\# 19}  \\
\midrule
9  & 2   & --  & --  & --  & --  & --  & --  & --  & -- & -- & -- & --  \\
10 & 4   & 2   & --  & --  & --  & --  & --  & --  & -- & -- & -- & --  \\
11 & 8   & 8   & 2   & --  & --  & --  & --  & --  & -- & -- & -- & --  \\
12 & 16  & 16  & 8   & 4   & --  & --  & --  & --  & -- & -- & -- & --  \\
13 & 32  & 28  & 16  & 7   & 2   & --  & --  & --  & -- & -- & -- & --  \\
14 & 64  & 48  & 32  & 12  & 4   & 4   & --  & --  & -- & -- & -- & --  \\
15 & 96  & 92  & 56  & 26  & 8   & 8   & 2   & --  & -- & -- & -- & --  \\
16 & 120 & 149 & 94  & 50  & 16  & 16  & 4   & 2   & -- & -- & -- & --  \\
17 & 160 & 244 & 168  &  94 & 32  & 32  & 8 & 4 & 2 & -- & -- & --  \\
18 & 224 & 348 & 248 & 138  & 48  & 56  & 16 & 8 & 4 & 2 & -- & --  \\
19 & 256 & 480 & 352  & 204 &  72 & 84  & 24 & 12 & 12 & 4 & 2 & --  \\
20 & 384 & 632 & 528  &  284 & 108  & 162  & 54 & 24 & 24 & 8 & 4 & 2  \\
\bottomrule
\end{tabular}}
\end{table}

\section{New Asymptotic lower bounds}
\label{sec:asymptotic}
While the previous sections have focused on finding large domains for fixed $n$,
we now discuss how such domains can be used to obtain asymptotic lower bounds on
domain sizes.

We describe two constructions that produce larger domains from domains on disjoint
sets of alternatives, and hence yield sequences of domains (for increasing numbers
of alternatives) in the classes corresponding to the size functions defined above.
\begin{enumerate}
    \item Given two domains $\mathcal{D}_1$ and $\mathcal{D}_2$ on disjoint sets
    of alternatives $A$ and $B$, and two partitions $A=A_1\cup A_2$ and
    $B=B_1\cup B_2$, the \emph{$1N3$--$3N1$ construction} produces a domain
    $S_1(\mathcal{D}_1,\mathcal{D}_2,A_1,A_2,B_1,B_2)$ on $A\cup B$ with the
    following never conditions on triples that meet both $A$ and $B$:
    \begin{itemize}
        \item For triples $\{a_1,a_2,b\}$ with $a_1,a_2\in A$ and $b\in B$, assign
        $1N3$ if $\{a_1,a_2\}\subseteq A_1$, and otherwise assign $3N1$.
        \item For triples $\{a,b_1,b_2\}$ with $a\in A$ and $b_1,b_2\in B$, assign
        $3N1$ if $\{b_1,b_2\}\subseteq B_1$, and otherwise assign $1N3$.
    \end{itemize}

    \item The second construction produces a domain $S_2(\mathcal{D}_1,\mathcal{D}_2)$
    on $A\cup B$ by assigning $2N3$ to every triple that meets both $A$ and $B$,
    while all other triples keep the never condition they had in $\mathcal{D}_1$
    or $\mathcal{D}_2$.
\end{enumerate}

\begin{prop}\label{superadde}
    Let $e\in \{f,g,h\}$ be one of the functions defined above. Then
    \begin{equation}\label{superadd}
        e(n+m) \geq 2e(n)e(m).
    \end{equation}
\end{prop}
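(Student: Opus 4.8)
The plan is to realise \eqref{superadd} through the concatenation-type constructions introduced above. Fix a domain $\mathcal{D}_1$ attaining $e(n)$ on the alternatives $A=\{1,\dots,n\}$ and a domain $\mathcal{D}_2$ attaining $e(m)$; after an order-preserving relabelling we may assume $\mathcal{D}_2$ lives on $B=\{n{+}1,\dots,n{+}m\}$, so that $|\mathcal{D}_1|=e(n)$, $|\mathcal{D}_2|=e(m)$ and the standard axis $1\,2\cdots(n{+}m)$ on $A\cup B$ restricts to the axes of $\mathcal{D}_1$ and $\mathcal{D}_2$. We treat $e=f$ by a plain block construction and $e\in\{g,h\}$ by the $S_1$ construction.

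For $e=f$, let $\mathcal{D}$ be the set of all orders on $A\cup B$ that rank every element of $A$ above every element of $B$, or every element of $B$ above every element of $A$, and that in either case induce an order of $\mathcal{D}_1$ on $A$ and an order of $\mathcal{D}_2$ on $B$. Since $n,m\ge 1$ the two block patterns are disjoint, and within each pattern an order is determined by the pair of induced orders, so $|\mathcal{D}|=2\,e(n)e(m)$. That $\mathcal{D}$ is a Condorcet domain follows from Sen's theorem checked triple by triple: a triple contained in $A$ (resp.\ $B$) has its restriction satisfying a never condition of $\mathcal{D}_1$ (resp.\ $\mathcal{D}_2$), and on a triple meeting both parts the element lying alone on its side is ranked first or last in every order of $\mathcal{D}$, hence never in the middle. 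This proves $f(n{+}m)\ge 2f(n)f(m)$.

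For $e\in\{g,h\}$ the block domain above fails, because on a mixed triple it forces a ``never middle'' condition, which is not of the form $iN1$ or $iN3$, so $\mathcal{D}$ is not peak-pit. The remedy is to use $S_1(\mathcal{D}_1,\mathcal{D}_2,A_1,A_2,B_1,B_2)$ for a well-chosen pair of partitions $A=A_1\cup A_2$, $B=B_1\cup B_2$. Its mixed never conditions are by construction only of types $1N3$ and $3N1$, so their union with a complete set of peak-pit never conditions for $\mathcal{D}_1$ and for $\mathcal{D}_2$ (which exists since both domains lie in peak-pit classes) is again a complete peak-pit set of never conditions, and by Sen's theorem the associated domain is a peak-pit Condorcet domain. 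It then remains to choose $A_1,A_2,B_1,B_2$ and to exhibit $2\,e(n)e(m)$ pairwise distinct orders satisfying all these never conditions, which gives $h(n{+}m)\ge 2h(n)h(m)$; for $e=g$ one additionally picks the partitions so that the completed domain contains a pair of mutually reverse orders, built from the reverse pairs already present in $\mathcal{D}_1$ and $\mathcal{D}_2$, yielding $g(n{+}m)\ge 2g(n)g(m)$.

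I expect this last step to be the main obstacle. Because the mixed never conditions are now peak-pit, the $2\,e(n)e(m)$ witnessing orders cannot be two block patterns: for each pair $(\sigma_A,\sigma_B)\in\mathcal{D}_1\times\mathcal{D}_2$ one must prescribe two genuinely interleaved merges of $\sigma_A$ and $\sigma_B$, tailored to the chosen partitions, and verify that every one of the $\binom{n+m}{3}$ never conditions holds on the resulting order and that distinct choices give distinct orders. Fixing an interleaving rule that works uniformly in the inputs — and, for $g$, producing a surviving reverse pair — is where the real content lies; the remaining bookkeeping (disjointness of the two halves, the count $2\,e(n)e(m)$, and inheritance of the pure-triple conditions) is routine. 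One could also experiment with the alternative construction $S_2$, which puts the single never condition $2N3$ on all mixed triples and so needs no choice of partitions, but it is less transparent there that the full factor of two survives.
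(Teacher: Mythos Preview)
Your argument for $e=f$ via the two-block domain $\{uv,\,vu:u\in\mathcal{D}_1,\,v\in\mathcal{D}_2\}$ is correct and self-contained; the paper instead treats $f$ together with $h$ via the $S_1$ construction, but your route works just as well there.

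For $e=h$ (and your plan for $e=g$) there is a genuine gap: you correctly single out $S_1$ as the peak-pit vehicle, but you explicitly leave the ``uniform interleaving rule'' unspecified and call it ``where the real content lies''.  The paper supplies exactly this missing ingredient, and it is far simpler than your discussion anticipates.  No nontrivial partition is needed; with the degenerate choice that puts $3N1$ on every $\{a_1,a_2,b\}$-triple and $1N3$ on every $\{a,b_1,b_2\}$-triple, the first $e(n)e(m)$ orders are the plain concatenations $uv$, and the second $e(n)e(m)$ orders are obtained by a \emph{single adjacent transposition at the seam}:
\[
u_1\cdots u_{n-1}\,v_1\,u_n\,v_2\cdots v_m.
\]
A two-case check (the $B$-element is still never first on $aab$-triples; the $A$-element is still never last on $abb$-triples) shows every mixed never condition survives the swap, while restriction to $A$ and $B$ recovers $(u,v)$ and the relative position of $u_n,v_1$ separates the two families, giving $2e(n)e(m)$ distinct orders.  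So the obstacle you flagged dissolves once one sees that only one transposition, not a genuine interleaving, is required.

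For $e=g$ the paper does not try to engineer a reverse pair inside $S_1$ as you propose; it switches to the $S_2$ construction you mention at the end, using concatenations in both orders $uv$ and $vu$ and the reverse pairs already present in $\mathcal{D}_1,\mathcal{D}_2$ to certify maximum width.  So your closing suggestion was in fact the paper's chosen route for that case.
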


\begin{proof}
Let $\mathcal{D}_1$ be a Condorcet domain on $n$ alternatives of size $e(n)$, and
let $\mathcal{D}_2$ be a Condorcet domain on $m$ alternatives of size $e(m)$,
both chosen to be largest in the relevant class. Under the $1N3$--$3N1$
construction, every concatenation $uv$ with $u\in \mathcal{D}_1$ and
$v\in \mathcal{D}_2$ belongs to a Condorcet domain on $n+m$ alternatives. In
addition, writing $u=u_1\cdots u_n$ and $v=v_1\cdots v_m$, the concatenation
\[
u_1\cdots u_{n-1}\, v_1\, u_n\, v_2\cdots v_m
\]
also belongs to the constructed domain. Thus the constructed domain has size at
least $2e(n)e(m)$. If $\mathcal{D}_1$ and $\mathcal{D}_2$ are peak-pit domains,
then the resulting domain is also peak-pit.

For domains of maximum width, the second construction can instead be used: in
addition to all concatenations $uv$ we also include all concatenations $vu$.
Since both input domains contain a pair of reverse orders, the resulting domain
again has maximum width.
\end{proof}

For the second construction we can also guarantee that, for fixed $k$, abundance
does not become asymptotically suboptimal.
\begin{prop}
    If $\mathcal{D}_1$ and $\mathcal{D}_2$ are $(k,2^{k-1})$-abundant for each
    $k\leq t$, then so is $S_2(\mathcal{D}_1,\mathcal{D}_2)$.
\end{prop}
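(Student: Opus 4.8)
The plan is to bound, for each $k\le t$, the restriction of $S_2(\mathcal{D}_1,\mathcal{D}_2)$ to an arbitrary $k$-subset $S$ of the $n+m$ alternatives, by separating the contributions coming from $A$ and from $B$. Write $S_A=S\cap A$ and $S_B=S\cap B$, and set $p=|S_A|$, $q=|S_B|$, so that $p+q=k$. I would then split into the case where $S$ lies inside a single block and the case where it meets both.

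If $S\subseteq A$ (so $q=0$, $p=k$), I would note that the restriction of $S_2(\mathcal{D}_1,\mathcal{D}_2)$ to $S$ contains the restriction of $\mathcal{D}_1$ to $S$, since every order of $\mathcal{D}_1$ is the $A$-part of some concatenation-type order lying in $S_2(\mathcal{D}_1,\mathcal{D}_2)$, of the sort used in the proof of Proposition~\ref{superadde}. Then $(k,2^{k-1})$-abundance of $\mathcal{D}_1$, which applies because $k\le t$, gives at least $2^{k-1}$ orders in the restriction; the case $S\subseteq B$ is symmetric.

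For the main case $p\ge 1$ and $q\ge 1$, I would use the structural fact recalled in the proof of Proposition~\ref{superadde}: for every $u\in\mathcal{D}_1$ and $v\in\mathcal{D}_2$, the domain $S_2(\mathcal{D}_1,\mathcal{D}_2)$ contains both the order placing the $u$-ordered $A$-block entirely above the $v$-ordered $B$-block and the order placing the $B$-block entirely above the $A$-block. Restricting the first family to $S$ produces, for each pair $(\sigma,\tau)\in\mathcal{D}_1|_{S_A}\times\mathcal{D}_2|_{S_B}$, the order on $S$ with $S_A$ ordered by $\sigma$ sitting above $S_B$ ordered by $\tau$; this map is injective, so it yields $|\mathcal{D}_1|_{S_A}|\cdot|\mathcal{D}_2|_{S_B}|$ distinct orders, which is at least $2^{p-1}\cdot 2^{q-1}$ by abundance of $\mathcal{D}_1$ and $\mathcal{D}_2$ (using $p,q<k\le t$). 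The second family gives another $|\mathcal{D}_1|_{S_A}|\cdot|\mathcal{D}_2|_{S_B}|$ orders on $S$; since $p,q\ge 1$ and $S_A\cap S_B=\emptyset$, the top element of an order from the first family lies in $S_A$ while that of an order from the second lies in $S_B$, so the two families are disjoint. Adding the two contributions gives at least $2\cdot 2^{p-1}\cdot 2^{q-1}=2^{k-1}$, as required.

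The counting above is routine; the step needing care is the structural input about $S_2(\mathcal{D}_1,\mathcal{D}_2)$ --- that the two ``block above block'' orders genuinely belong to it for every $u$ and $v$, and that in the one-sided case restricting to a block recovers all of $\mathcal{D}_1$ (or $\mathcal{D}_2$) and nothing that would reduce the count. These are precisely the properties of the mixed-triple never conditions of $S_2$ that are exploited in the proof of Proposition~\ref{superadde}, so I would verify them first and then run the two-case argument, with the disjointness observation supplying the extra factor of $2$ that makes the two-sided case tight.
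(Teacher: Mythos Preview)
Your proposal is correct and follows essentially the same approach as the paper: both split into the one-block case (immediate from abundance of the input domain) and the mixed case, and in the mixed case both count the $A$-over-$B$ and $B$-over-$A$ concatenations separately to obtain $2\cdot 2^{p-1}\cdot 2^{q-1}=2^{k-1}$ restrictions. Your explicit justification of the disjointness of the two families via the top element is a detail the paper leaves implicit, but otherwise the arguments coincide.
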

\begin{proof}
This is immediate for $k$-tuples that contain alternatives from only one of $A$
and $B$. Otherwise, let the $k$-tuple contain a subset $T_1\subseteq A$ and a
subset $T_2\subseteq B$, with $|T_1|+|T_2|=k$. Then
$S_2(\mathcal{D}_1,\mathcal{D}_2)$ contains every concatenation of a linear
order from the restriction of $\mathcal{D}_1$ to $T_1$ (of which there are
$2^{|T_1|-1}$) with a linear order from the restriction of $\mathcal{D}_2$ to
$T_2$ (of which there are $2^{|T_2|-1}$), and also the same number of
concatenations in the opposite order. This gives a total of
\[
2\cdot 2^{|T_1|-1}\cdot 2^{|T_2|-1}=2^{k-1}
\]
linear orders in the restriction to this $k$-tuple.
\end{proof}

Proposition~\ref{superadde} also yields an asymptotic lower bound on the growth
rate.
\begin{prop}\label{1/n}
For any fixed integer $k\ge 1$ we have
\[
\liminf_{n\rightarrow \infty} e(n)^{1/n}\geq (2e(k))^{1/k},
\]
where $e\in \{f,h\}$.
\end{prop}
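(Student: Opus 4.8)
The plan is to iterate the super\-multiplicativity bound of Proposition~\ref{superadde} and then take $n$-th roots. Fix $k\ge 1$ and write $c=e(k)$ for brevity. First I would establish, by induction on $q\ge 1$, the inequality $e(qk)\ge 2^{q-1}c^{q}$. The base case $q=1$ is immediate, and for the inductive step I apply \eqref{superadd} with $n=qk$ and $m=k$, obtaining $e((q+1)k)\ge 2\,e(qk)\,e(k)\ge 2\cdot 2^{q-1}c^{q}\cdot c=2^{q}c^{q+1}$.

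Next I would remove the divisibility restriction, which is the only point in the argument that needs a little care, since the statement concerns all large $n$ and not merely multiples of $k$. Given an arbitrary $n\ge k$, write $n=qk+r$ with $q=\lfloor n/k\rfloor\ge 1$ and $0\le r\le k-1$. If $r=0$, the bound from the previous paragraph already gives $e(n)\ge 2^{q-1}c^{q}$. If $r\ge 1$, I apply \eqref{superadd} once more with $n=qk$ and $m=r$, using the trivial estimate $e(r)\ge 1$ (a singleton domain on $r$ alternatives is a Condorcet domain, and it is peak-pit, so this holds for both $e=f$ and $e=h$); this yields $e(n)=e(qk+r)\ge 2\,e(qk)\,e(r)\ge 2^{q}c^{q}\ge 2^{q-1}c^{q}$. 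In either case $e(n)\ge 2^{q-1}c^{q}$ for every $n\ge k$.

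Finally I would take $n$-th roots. From $e(n)\ge 2^{q-1}c^{q}$ we get $e(n)^{1/n}\ge 2^{(q-1)/n}c^{q/n}$. Since $qk\le n<(q+1)k$, we have $q\to\infty$ and both $(q-1)/n$ and $q/n$ tend to $1/k$ as $n\to\infty$, so the right-hand side converges to $2^{1/k}c^{1/k}=(2e(k))^{1/k}$. Hence $\liminf_{n\to\infty}e(n)^{1/n}\ge (2e(k))^{1/k}$, as claimed.

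I do not expect a genuine obstacle here: the proof is essentially routine once Proposition~\ref{superadde} is available. The only subtlety worth flagging is the handling of $n\not\equiv 0\pmod k$, which is dispatched by the one extra application of \eqref{superadd} together with $e(r)\ge 1$; this is also precisely where the restriction to $e\in\{f,h\}$ matters, since a singleton domain has maximum width only vacuously and so the same crude estimate is not available for $g$.
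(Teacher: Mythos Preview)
Your argument is correct and follows the same route as the paper's proof: iterate \eqref{superadd} to obtain $e(qk)\ge 2^{q-1}e(k)^q$, write a general $n$ as $qk+r$, apply \eqref{superadd} once more together with the trivial bound $e(r)\ge 1$, and then take $n$-th roots and pass to the limit. Your explicit case split on $r=0$ versus $r\ge 1$ is, if anything, slightly more careful than the paper's version.
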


\begin{proof}
By \eqref{superadd}, for any positive integer $a$ we have
$e(ak)\ge 2^{a-1}e(k)^a$. Writing $n=ak+b$ with $0\le b<k$ and using
\eqref{superadd} once more gives
\[
e(n)=e(ak+b)\ge 2\,e(ak)e(b)\ge 2^{a}\,e(k)^a\,e(b)
= (2e(k))^{(n-b)/k}\,e(b).
\]
Taking $n$th roots and letting $n\to\infty$ yields
$\liminf_{n\to\infty} e(n)^{1/n}\ge (2e(k))^{1/k}$.
\end{proof}

Using our data, we obtain that the growth rate for peak-pit domains (and hence
for general Condorcet domains) is at least $2.18902$. This is lower than the
best bound $2.1973$ obtained in \cite{karpov2025coherent} for peak-pit domains,
despite the fact that our search yields larger domains for $n\leq 20$ than the
method from \cite{karpov2025coherent}.

Instead of requiring domains of a specified type, we can consider the growth
rate for $f(n)$ itself:
\begin{theorem}
    $\lim_{n\rightarrow\infty} f(n)^{1/n}\geq 2.198139$.
\end{theorem}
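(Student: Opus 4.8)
The plan is to apply Fishburn's replacement scheme --- a substitution (composition) construction for Condorcet domains --- recursively, starting from the record domain on $20$ alternatives from Table~\ref{tab:inductive}. The building block is the following. Given a Condorcet domain $\mathcal{D}$ on $[d]$ and Condorcet domains $\mathcal{E}_1,\dots,\mathcal{E}_d$ on pairwise disjoint alternative sets $B_1,\dots,B_d$, form the domain $\mathcal{D}[\mathcal{E}_1,\dots,\mathcal{E}_d]$ on $B_1\cup\cdots\cup B_d$ consisting of all orders obtained by choosing $\sigma\in\mathcal{D}$ and $\tau_i\in\mathcal{E}_i$ and replacing the symbol $i$ in $\sigma$ by the string $\tau_i$; thus each $B_i$ occurs as a consecutive block, and the blocks appear in the order dictated by $\sigma$. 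This domain has size $|\mathcal{D}|\prod_{i=1}^{d}|\mathcal{E}_i|$.

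To see that it is a Condorcet domain I would verify Sen's never-condition criterion on the three possible types of triple: a triple contained in a single block $B_i$ inherits a never condition from $\mathcal{E}_i$; a triple meeting three distinct blocks is always ordered exactly as the triple of the corresponding block representatives is ordered in $\mathcal{D}$, and hence inherits the never condition carried by that triple of $\mathcal{D}$; and a triple $\{u,v,w\}$ with $u,v\in B_i$ and $w\in B_j$ ($i\ne j$) always has $w$ lying entirely before or entirely after the block $B_i$, so $w$ can never occupy the middle position of the triple. The last case is the only delicate one: the relative order of $B_i$ and $B_j$ may vary across the domain, but the ``never middle'' conclusion for $w$ holds regardless, so it still supplies a valid never condition.

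Now I would iterate. Let $\mathcal{D}$ be a Condorcet domain on $d=20$ alternatives with $|\mathcal{D}|=3155366$ (Table~\ref{tab:inductive}), set $\mathcal{G}_0=\mathcal{D}$ and $\mathcal{G}_{t+1}=\mathcal{D}[\mathcal{G}_t,\dots,\mathcal{G}_t]$ ($d$ identical copies). Then $\mathcal{G}_t$ is a Condorcet domain on $d^{\,t+1}$ alternatives, and from $|\mathcal{G}_{t+1}|=|\mathcal{D}|\cdot|\mathcal{G}_t|^{d}$ one gets $|\mathcal{G}_t|=|\mathcal{D}|^{\,1+d+\cdots+d^{t}}=|\mathcal{D}|^{(d^{t+1}-1)/(d-1)}$, hence $f(d^{\,t+1})\ge |\mathcal{D}|^{(d^{t+1}-1)/(d-1)}$. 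Since $f$ is super-multiplicative by Proposition~\ref{superadde}, Fekete's lemma ensures that $\lim_{n\to\infty}f(n)^{1/n}$ exists (and it is finite by \cite{Raz}); evaluating along $n=d^{\,t+1}$, taking $n$-th roots and letting $t\to\infty$ gives
\[
\lim_{n\to\infty}f(n)^{1/n}\ \ge\ |\mathcal{D}|^{1/(d-1)}\ =\ 3155366^{1/19},
\]
and $3155366^{1/19}>2.198139$ by a direct numerical evaluation.

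The substitution construction is essentially classical, so the genuinely load-bearing point is the choice of parameters: one must exploit the exponent $1/(d-1)$, which appears automatically from the geometric sum $1+d+\cdots+d^{t}$, rather than the weaker $1/d$ coming from plain super-multiplicativity or from Proposition~\ref{1/n}, and one must feed in the largest domain available, since $f(d)^{1/(d-1)}$ is increasing over the range $d\le 20$ covered by our data. The extra factor $2$ gained in Proposition~\ref{superadde} is asymptotically irrelevant here and can be dropped. Finally, observe that the domains $\mathcal{G}_t$ are not peak-pit --- the two-block triples carry ``never middle'' conditions --- which is exactly why this argument improves the bound for $f$ but not the peak-pit bound $h$ of \cite{karpov2025coherent}.
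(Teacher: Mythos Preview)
Your proof is correct and follows essentially the same route as the paper. The paper invokes Fishburn's replacement-scheme inequality $f(n+m-1)\ge f(n)f(m)$ directly and its consequence $\lim_{n\to\infty} f(n)^{1/n}\ge f(k)^{1/(k-1)}$, then plugs in $k=20$ and $f(20)\ge 3155366$; you reconstruct the replacement (substitution) scheme explicitly, verify the never conditions on the three triple types, iterate the full $d$-ary substitution to reach the same exponent $1/(d-1)$, and justify existence of the limit via Fekete and \cite{Raz}---a more self-contained presentation of the same argument leading to the identical numerical bound $3155366^{1/19}$.
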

\begin{proof}
Fishburn \cite{FISHBURN1996209} showed that $f(n+m-1)\geq f(n)f(m)$, and as a
consequence $\lim_{n\rightarrow\infty} f(n)^{1/n}\geq f(k)^{1/(k-1)}$ for every
fixed $k$. Using our constructed domains, for $k=20$ we obtain the lower bound
$3155366^{1/19}=2.198139495\ldots$.
\end{proof}
This gives the currently best lower bound for the asymptotic growth rate of
maximum Condorcet domains.



\section{Summary of bounds}
\label{sec:large_domain_conclusion}


Table~\ref{tab:largest_sizes} summarises the best currently known lower bounds
for the maximum size of Condorcet domains.  For $21\le n\le 25$, the entries
marked with subscript $_{ks}$ are obtained by applying the construction of~\cite{karpovslinko}
to the domains found for smaller $n$.

\clearpage

\begin{table}[ht]
\centering
\caption{Maximum size of Condorcet domains. Our improved values are stated in boldface.  Numbers underscored with $ks$ use our domains for lower $n$ in the construction from \cite{karpovslinko}.  }
\label{tab:largest_sizes}
\begin{tabular}{cccccccc}
\toprule
\textbf{n} & \textbf{Fishburn} & $f(n)$ \& h(n)&
\textbf{$x_n/x_{n-1}$}&
\textbf{$\sqrt{x_n/x_{n-2}}$}&
$\lceil\frac{4}{5^{1.5}}\sqrt{5}^n \rceil$ &$ g(n)$ & \\ 
\midrule
3 & 4   & 4                     & 2     & 2     & 4 & 4 &  \\
4 & 9   & 9                     & 2.25  & 2.121 & 9 & 9 &  \\
5 & 20  & 20                    & 2.222 & $\sqrt{5}=2.2360$ & 20& 20&  \\
6 & 45  & 45                    & 2.250 & $\sqrt{5}$& 45& 45 &   \\
7 & 100 & 100                   & 2.222 & $\sqrt{5}$ & 100& 100 &  \\
8 & 222   & 224                 & 2.240 & 2.231 & 224& 222&  \\
9 & 488   & $\geq$\textbf{492}  & 2.196 & 2.218 & 500& 488 & \\
10 & 1069 & $\geq$\textbf{1104} & 2.244 & 2.220 & 1119& - &  \\
11 & 2324 & $\geq$\textbf{2452} & 2.221 & 2.232 & 2500& - &  \\
12 & 5034 & $\geq$\textbf{5520} & 2.251 & 2.236 & 5591& - &  \\
13 & 10840 & $\geq$\textbf{12182} & 2.206 & 2.228& 12500& - &  \\
14 & 23266 & $\geq$\textbf{27026} & 2.218 & 2.212& 27951& - &  \\
15 & 49704 & $\geq$\textbf{59430} & 2.198 & 2.208 & 62500 & - &  \\
16 & 105884& $\geq$\textbf{131536} & 2.213 & 2.206& 139755& - & \\
17 & 224720& $\geq$\textbf{292015} & 2.220 & 2.216& 312500& - &  \\
18 & 475773& $\geq$\textbf{649079} & 2.222 &2.221 & 698772& - &  \\
19 &1004212& $\geq$\textbf{1432836} &2.207 &2.215& 1562500& - &   \\
20 &2115186& $\geq$\textbf{3155366} &2.202 &2.204& 3493857& - &  \\
\hline
21 &4443896& $\geq$\textbf{6674931}$_{ks}$ &2.115&2.221& 7812500 & - &  - \\
22 &9319702& $\geq$\textbf{15037178}$_{ks}$  &2.253&2.183& 17469282 & - & -  \\
23 &19503224& $\geq$\textbf{33162136}$_{ks}$ &2.205&2.229& 39062500 & - &  - \\
24 &40750884& $\geq$\textbf{73279872}$_{ks}$ &2.209&2.207& 87346406 & - &  - \\
25 &84990640&    $\geq$\textbf{160316046}$_{ks}$ & 2.187 & 2.198 & 195312500 &  & \\
\bottomrule
\end{tabular}
\end{table}

\section*{Acknowledgement}
The Basic Research Program of the National Research University Higher School of Economics partially supported Alexander Karpov. This research utilised Queen Mary's Apocrita HPC facility, supported by QMUL Research-IT. This research was conducted using the resources of High Performance Computing Center North (HPC2N). 

\clearpage

\begin{thebibliography}{10}

\bibitem{AJ84}
J.M. Abello and C.R. Johnson.
\newblock How large are transitive simple majority domains?
\newblock {\em SIAM Journal on Algebraic Discrete Methods}, 5(4):603--618,
  1984.

\bibitem{akello2024condorcet}
Dolica Akello-Egwel, Charles Leedham-Green, Alastair Litterick, Klas
  Markstr{\"o}m, and S{\o}ren Riis.
\newblock Condorcet domains on at most seven alternatives.
\newblock {\em Mathematical Social Sciences}, 133:23--33, 2024.

\bibitem{arrow63}
K.J. Arrow.
\newblock {\em Social choice and individual values}, volume~12.
\newblock Yale university press, second edition, 1963.

\bibitem{Aswal2003}
Navin Aswal, Shurojit Chatterji, and Arunava Sen.
\newblock Dictatorial domains.
\newblock {\em Economic Theory}, 22(1):45--62, 2003.

\bibitem{black}
D.~Black.
\newblock On the rationale of group decision-making.
\newblock {\em Journal of Political Economy}, 56(1):23--34, 1948.

\bibitem{craven1992social}
John Craven.
\newblock {\em Social choice: a framework for collective decisions and   individual judgements}.
\newblock Cambridge University Press, 1992.

\bibitem{DKK:2012}
V.I. Danilov, A.V. Karzanov, and G.A. Koshevoy.
\newblock Condorcet domains of tiling type.
\newblock {\em Discrete Applied Mathematics}, 160(7-8):933--940, 2012.

\bibitem{Con85}
Le~Marquis de~Condorcet.
\newblock {\em Essai sur l'application de l'analyse \`{a} la probabilit\'{e}
  des d\'{e}cisions rendues \`{a} la pluralit\'{e} des voix}.
\newblock L'Imprimerie Royale, Paris, 1785.

\bibitem{PF:1992}
P.C. Fishburn.
\newblock Notes on {C}raven's conjecture.
\newblock {\em Social Choice and Welfare}, 9:259--262, 1992.

\bibitem{fishburn1996acyclic}
Peter Fishburn.
\newblock Acyclic sets of linear orders.
\newblock {\em Social choice and Welfare}, 14(1):113--124, 1996.

\bibitem{FISHBURN1996209}
Peter~C. Fishburn.
\newblock Decision theory and discrete mathematics.
\newblock {\em Discrete Applied Mathematics}, 68(3):209--221, 1996.

\bibitem{GR:2008}
A.~Galambos and V.~Reiner.
\newblock Acyclic sets of linear orders via the {B}ruhat orders.
\newblock {\em Social Choice and Welfare}, 30(2):245--264, 2008.

\bibitem{Johnson78}
R.C. Johnson.
\newblock Remarks on mathematical social choice.
\newblock Technical report, Dept. Economics and Institute for Physical Science
  and Technology, Univ. Maryland, 1978.

\bibitem{karpov2024local}
Alexander Karpov, Klas Markstr{\"o}m, S{\o}ren Riis, and Bei Zhou.
\newblock Local diversity of {C}ondorcet domains.
\newblock {\em arXiv preprint arXiv:2401.11912}, 2024.

\bibitem{karpov2025coherent}
Alexander Karpov, Klas Markstr{\"o}m, S{\o}ren Riis, and Bei Zhou.
\newblock Coherent domains and improved lower bounds for the maximum size of  {C}ondorcet domains.
\newblock {\em Discrete Applied Mathematics}, 370:57--70, 2025.

\bibitem{karpovslinko}
Alexander Karpov and Arkadii Slinko.
\newblock Constructing large peak-pit {C}ondorcet domains.
\newblock {\em Theory and Decision}, 94(1):97--120, 2023.

\bibitem{Leedham-Green2023condorcet}
Charles Leedham-Green, , Klas Markstr{\"o}m, and S{\o}ren Riis.
\newblock The largest {C}ondorcet domains on 8 alternatives.
\newblock {\em Social Choice and Welfare}, 62:109--116, 2024.

\bibitem{markstrom2024arrow}
Klas Markstr{\"o}m, S{\o}ren Riis, and Bei Zhou.
\newblock Arrow's single peaked domains, richness, and domains for plurality
  and the {B}orda count.
\newblock {\em arXiv preprint arXiv:2401.12547}, 2024.

\bibitem{monjardet2009acyclic}
Bernard Monjardet.
\newblock Acyclic domains of linear orders: a survey.
\newblock {\em The Mathematics of Preference, Choice and Order: Essays in Honor
  of Peter C. Fishburn}, pages 139--160, 2009.

\bibitem{puppe2024maximal}
Clemens Puppe and Arkadii Slinko.
\newblock Maximal {C}ondorcet domains. a further progress report.
\newblock {\em Games and Economic Behavior}, 145:426--450, 2024.

\bibitem{raynaud1982individual}
Herv{\'e} Raynaud.
\newblock {\em The individual freedom allowed by the value restriction  condition}.
\newblock Institute for Mathematical Studies in the Social Sciences, 1982.

\bibitem{Raz}
Ran Raz.
\newblock V{C}-dimension of sets of permutations.
\newblock {\em Combinatorica}, 20(1):1--15, 2000.

\bibitem{Sen1966}
A.~K. Sen.
\newblock A {P}ossibility theorem on majority decisions.
\newblock {\em Econometrica}, 34(2):491--499, 1966.

\bibitem{slinko2019condorcet}
Arkadii Slinko.
\newblock Condorcet domains satisfying {A}rrow's single-peakedness.
\newblock {\em Journal of Mathematical Economics}, 84:166--175, 2019.

\bibitem{slinko2024family}
Arkadii Slinko.
\newblock A family of {C}ondorcet domains that are single-peaked on a circle.
\newblock {\em Social Choice and Welfare}, 63:57--67, 2024.

\bibitem{zhou2024cdl}
Bei Zhou, Klas Markstr{\"o}m, and S{\o}ren Riis.
\newblock Cdl: A fast and flexible library for the study of permutation sets
  with structural restrictions.
\newblock {\em SoftwareX}, 28:101951, 2024.

\bibitem{zhou2025orderlyalgorithmgenerationcondorcet}
Bei Zhou and Klas Markstr{\"o}m.
\newblock An orderly algorithm for generation of {C}ondorcet domains, 2025.

\bibitem{zhou2025efficient}
Bei Zhou and S{\o}ren Riis.
\newblock An efficient heuristic search algorithm for discovering large
  condorcet domains.
\newblock {\em 4OR}, 23:193 -- 216, 2023.

\end{thebibliography}

\section*{Appendix A}\label{appendix}
Pseudocode for the inductive search algorithm.

\begin{algorithm}[H]
\caption{Inductive extension from $n$ to $n{+}1$ alternatives}\label{alg:inductive}
\begin{algorithmic}[1]
\Require{A list $L_n$ of representatives of non-isomorphic large CDs on $[n]$, each represented by a complete assignment $\mathcal{N}_{[n]}$;\\
Allowed peak-pit rules $\mathcal{R}$ (typically $\{1N3,3N1,2N1,2N3\}$); size threshold $\beta$.}
\Ensure{A list $L_{n+1}$ of representatives of non-isomorphic candidate CDs on $[n{+}1]$ of size at least $\beta$.}
\State $L_{n+1}\gets \emptyset$
\State \textbf{/* Step 1: instantiate $n$-alternative assignments on each $n$-subset */} 
\For{$i\in[n{+}1]$}
  \State $I_i \gets \{\mathcal{N}([n{+}1]\setminus\{i\}) : \mathcal{N}\in L_n\}$ \Comment{order-preserving relabelling}
\EndFor
\State \textbf{/* Step 2: enumerate compatible overlaps and merge */} 
\For{ordered pairs $(i,j)\in[n{+}1]^2$ with $i\neq j$}
  \State $C \gets [n{+}1]\setminus\{i,j\}$
  \State Group $I_i$ and $I_j$ by their restriction to triples inside $C$
  \For{each common group key $K$}
    \For{each $\mathcal{N}^i\in I_i(K)$}
      \For{each $\mathcal{N}^j\in I_j(K)$}
        \State $\mathcal{N}_{\mathrm{partial}} \gets \mathcal{N}^i \cup \mathcal{N}^j$
        \State \textbf{/* Step 3: complete unassigned triples with pruning */} 
        \State Run backtracking on $\mathcal{U}(\mathcal{N}_{\mathrm{partial}})$, assigning only rules in $\mathcal{R}$
        \State \hspace{1.6em}and pruning partial assignments that cannot be extended to a copious domain
        \For{each completed assignment $\widehat{\mathcal{N}}$ found}
          \State Compute $|\mathcal{D}(\widehat{\mathcal{N}})|$
          \If{$|\mathcal{D}(\widehat{\mathcal{N}})|\ge \beta$}
            \State Add $\widehat{\mathcal{N}}$ to $L_{n+1}$
          \EndIf
        \EndFor
      \EndFor
    \EndFor
  \EndFor
\EndFor
\State Remove isomorphic copies from $L_{n+1}$
\State \Return $L_{n+1}$
\end{algorithmic}
\end{algorithm}

\end{document}